\newcommand{\be}{\begin{equation}}
\newcommand{\ee}{\end{equation}}
\newcommand{\bea}{\begin{eqnarray}}
\newcommand{\eea}{\end{eqnarray}}
\def\squareforqed{\hbox{\rlap{$\sqcap$}$\sqcup$}}
\def\qed{\ifmmode\squareforqed\else{\unskip\nobreak\hfil
\penalty50\hskip1em\null\nobreak\hfil\squareforqed
\parfillskip=0pt\finalhyphendemerits=0\endgraf}\fi}
\def\endenv{\ifmmode\;\else{\unskip\nobreak\hfil
\penalty50\hskip1em\null\nobreak\hfil\;
\parfillskip=0pt\finalhyphendemerits=0\endgraf}\fi}
\newcommand{\tr}{\text{Tr}}
\newcommand{\I}{\mathbbm{1}}
\newcommand{\la}{\langle}
\newcommand{\ra}{\rangle}
\newtheorem*{rep@theorem}{\rep@title}
\newcommand{\newreptheorem}[2]{%
\newenvironment{rep#1}[1]{%
 \def\rep@title{#2 \ref{##1}}%
 \begin{rep@theorem}}%
 {\end{rep@theorem}}}
\newtheorem{thm}{Theorem}
\newtheorem{obs}{Observation}
\begin{document}

\title{Global versus Local Discrimination of Locally Implementable Multipartite Unitaries}


\author{Satyaki Manna}
\affiliation{School of Physics, Indian Institute of Science Education and Research Thiruvananthapuram, Kerala 695551, India}
\affiliation{Department of Physics, School of Basic Sciences, Indian Institute of Technology Bhubaneswar, Odisha 752050, India}
\author{Sneha Suresh}
\affiliation{School of Physics, Indian Institute of Science Education and Research Thiruvananthapuram, Kerala 695551, India}
\author{Anandamay Das Bhowmik}
\affiliation{School of Physics, Indian Institute of Science Education and Research Thiruvananthapuram, Kerala 695551, India}
\affiliation{S. N. Bose National Centre for Basic Sciences, Block JD, Sector III, Salt Lake, Kolkata 700 106, India}
\author{Debashis Saha}
\affiliation{Department of Physics, School of Basic Sciences, Indian Institute of Technology Bhubaneswar, Odisha 752050, India}


\begin{abstract}
We study single-shot distinguishability of locally implementable multipartite unitaries under Local Operations and Classical Communication (LOCC) and global operations. As unitary discrimination depends on both the choice of probing states and the measurements on the evolved states, we classify LOCC and global distinguishability into two categories: adaptive strategies, where probing states are chosen based on measurement outcomes from other subsystems, and restricted strategies, where probing states remain fixed. Our findings uncover three surprising features in the bipartite setting and establish new structural limits for unitary discrimination: (i) Certain pairs of unitaries are globally distinguishable with restricted strategies but indistinguishable under LOCC, even with adaptive strategies. (ii) There exist sets of four unitaries that are distinguishable via LOCC, yet remain globally indistinguishable with restricted strategies. (iii) Some sets of unitaries are globally indistinguishable when probed with separable states, but become distinguishable via LOCC.
\end{abstract}

\maketitle


\section{Introduction}
One of the most fascinating features of quantum theory is captured by the phenomenon known as \textit{nonlocality without entanglement} \cite{Benett}. This arises when local quantum operations assisted by classical communication fail to perfectly distinguish a set of multipartite quantum states that are orthogonal as well as separable, and thus locally preparable. This demonstrates a profound quantum signature that the whole system can possess properties not discernible from its parts, even in the absence of entanglement. Over the years, this phenomenon has attracted significant attention for its foundational implications and its relevance to quantum information processing \cite{Walgate,Walgate2,nathanson2005distinguishing,Bandyopadhyay_2011,Bhattacharya,halder, Cohen2,PhysRevA.85.042319,ghosh_s,zhen,supic,ha2021quantum,He2024,Niset,hayashi,watrous05,Horodecki}.

While this effect in quantum states has been extensively studied, a fundamental question arises: Does a similar feature exist for the discrimination of quantum processes? Specifically, one can compare the distinguishability of locally implementable multipartite quantum channels under two paradigms: global operations, where a single user has access to all subsystems of an unknown multipartite process, and local operations with classical communication (LOCC), where each spatially separated party can access only their local subsystem.

Although channel distinguishability is a well-explored field \cite{AYuKitaev_1997, Piani,Dariano2,Sacchi,Harrow,William,watrous2008, PhysRevA.111.022221,GMauroDAriano_2002,entangledstatesusefulsingle,acin,acin2,duan, Duan2, Duan3, Ziman10022010,njp2021,entangledstatesusefulsingle,acin,acin2,duan, Duan2, Ziman10022010,bavaresco,bavaresco2022unitary}, the comparative study of global versus LOCC strategies for discriminating quantum channels remains largely uncharted \cite{cohen,Heng,chitambar}. Prior work on unitary operations within the LOCC framework has focused primarily on unitaries that are not locally implementable \cite{Heng,Duan3}. The problem becomes significantly more interesting and subtle when we consider product unitary operations, that is, when they are locally implementable. Discrimination of quantum channels involves two controllable components: the choice of the probing state on which the unknown channel acts, and the measurement performed on the evolved state to extract information about the unitary. Accordingly, discrimination protocols can be classified into two types: restricted strategies, where the probing states remain fixed throughout the protocol, and adaptive strategies, where probing states are chosen based on measurement outcomes from other subsystems. This additional layer of complexity arises only in the case of locally implementable multipartite channels.

The primary goal of this work is to study the distinguishability of product unitaries under both global and LOCC operations. We formulate the problem by classifying both global and local strategies into restricted and adaptive types. After establishing generic relationships between these strategies, we present three key findings for bipartite unitaries. First, certain pairs of qubit unitaries are globally distinguishable using restricted strategies but remain indistinguishable under LOCC, even with adaptive strategies.
Second, there exist sets of four qutrit unitaries that are distinguishable via LOCC but indistinguishable globally when restricted strategies are used.
And third, some sets of qubit unitaries are globally indistinguishable when probed with separable states, yet become distinguishable via LOCC. In addition, we show that for bipartite qubit unitaries, adaptive and restricted LOCC strategies are equivalent. We also find that LOCC distinguishability can be asymmetric, meaning a set of unitaries may be distinguishable if one party initiates the protocol but not if the other does.
These findings reveal unexpected separations between global and local discrimination of product unitaries, exposing new fundamental limitations on what can be achieved with LOCC and global strategies.

We begin by revisiting the problem of distinguishing a set of unitary operations, and then formalize the notion of global and LOCC protocols.


\section{Distinguishability of Unitary Operations}\label{sec2}
Unitary operator $(U)$ is a linear operator $U:H\rightarrow H$ on a Hilbert space $H$ that satisfies $U^\dagger U=UU^\dagger=\I$.
We consider a priori known set of $m$ unitaries $\{U_i\}_{i=1}^m$ acting on a $d$-dimensional quantum state. We assume that all the unitaries are sampled from a given probability distribution. To distinguish the unitaries, the unitary device is fed with a known quantum state, single or entangled, and the device carries out one of $m$ unitary operations. After this process, the device gives an evolved state as the output. Therefore, any measurement of the evolved state can be performed. Let us describe this measurement by a set of POVM (Positive Operator Valued Measurement) elements $\{N_{b}\}$, where $b\in\{1,\cdots,m\}$. The protocol is successful in distinguishing the unitaries if $b$ is the same as $i$. Any classical post-processing of outcome $b$ can be included in the measurement $\{N_{b}\}$. Distinguishability of a set of unitaries eventually reduces to the distinguishability of evolved states. If the probe is single system $\rho$, this problem is equivalent to discriminating the set of states $\{U_i\rho U_i^\dagger\}_{i=1}^m$. In general, one considers a bipartite probe state $\rho_{AB}$, and the task reduces to distinguishing the set of states $\{(U_i\otimes\I)\rho_{AB}(U_i^\dagger\otimes\I)\}_{i=1}^m$. Needless to say, 
if $\rho_{AB}$ is separable (i.e., a product probe), the strategy offers no advantage over a single-system probe and performs equivalently \cite{entangledstatesusefulsingle}. Several previous works proved that a single system is sufficient for the discrimination of two unitaries, that means entangled probes do not perform better than single systems in this case \cite{entangledstatesusefulsingle,GMauroDAriano_2002}. In fact, there exists a necessary and sufficient condition for two unitaries to be distinguishable. Two unitaries $U_1$ and $U_2$ are distinguishable if $\min|con\{e^{\mathbbm{i}\theta_j}\}|= 0$, where  $\{e^{\mathbbm{i}\theta_j}\}_j$
are the eigenvalues of $U_1^\dagger U_2$, $con\{e^{\mathbbm{i}\theta_j}\}$ denotes the set of complex numbers that can be written as the convex combinations of $\{e^{\mathbbm{i}\theta_j}\}_j$ and $\min|\cdot|$ denotes the minimum norm over all those complex numbers. For the discrimination among more than two unitaries, there are instances where entangled probes are strictly better than single system probes.
%
More details of the distinguishability of unitaries can be found in \cite{entangledstatesusefulsingle,GMauroDAriano_2002}.

\subsection{Protocols for Global distinguishability}
In this section, we present general protocols for global distinguishability of $m$ number of $n$-partite unitaries $\{U_i\}_{i=1}^m =\{\otimes_{k=1}^n U_i^{(k)}\}_{i=1}^m$. In general, the dimension of the unitaries can be different for different parties.

\subsubsection{Global Distinguishability with restricted strategy (GDR)}
In the global restricted strategy for distinguishing a set of unitary operations, the multipartite unitary is applied simultaneously across all subsystems, meaning the user cannot select and implement the operation on individual subsystems. Thus, without loss of generality, the user prepares an arbitrary composite quantum state single or entangled $\rho_{AB}$ on the composite Hilbert space $\mathcal{H}_A \otimes \mathcal{H}_B$, where $\mathcal{H}_A = \bigotimes_{k=1}^n \mathcal{H}_k $ represents the $n$-partite subsystem on which the unitaries act. The unknown unitary $U_i$ is applied to $A$ subpart of $\rho_{AB}$ resulting in an evolved state $\sigma_{AB}^i = (U_i \otimes \I_B) \rho_{AB} (U_i^{\dagger} \otimes \I_B)$. The party then performs arbitrary quantum operations in order to distinguish between these evolved states $\{\sigma^{i}_{AB}\}_{i=1}^m$, thereby identifying the unknown unitary operation from the given set of possibilities.

\subsubsection{Global Distinguishability with adaptive strategy (GDA)}\label{global_adap}
In this case, the user can sequentially choose different subsystems of the $n$-partite unitary to discriminate them. Suppose the user first selects an arbitrary subset of parties, $\mathbbm{S}_1 \subseteq [n]$, where $[n] \equiv \{1,\cdots,n\}$, consisting of $n_1$ subsystems with $n_1\leq n$. The user then prepares an initial probing state $\rho_{AB}$, which may in general be entangled across the composite Hilbert space $\mathcal{H}_A \otimes \mathcal{H}_B$, where $\mathcal{H}_A = \bigotimes_{k\in\mathbbm{S}_1} \mathcal{H}_k$ represents the $n_1$-partite subsystem on which the unitaries act. The evolved state is then $\sigma_{AB}^i = (\otimes_{k\in\mathbbm{S}_1}U_i^{(k)} \otimes \I_B) \rho_{AB} (\otimes_{k\in\mathbbm{S}_1} U_i^{(k)} \otimes \I_B)^\dagger$. The user performs a measurement to discriminate among these transformed states. If perfect discrimination is achieved, the protocol terminates successfully. Otherwise, the user employs a suitable measurement to eliminate some of the unitaries. Based on this partial information, the user selects another subset $\mathbbm{S}_2 \subseteq [n]\setminus \mathbbm{S}_1$ of size $n_2$, prepares an appropriate probing state, and attempts to distinguish the corresponding $n_2$-partite unitaries. If successful, the protocol ends; otherwise, further elimination is performed. The process is repeated until the user exhausts all the subsystems on which the multipartite unitary acts.
There are many possible subsets the user may choose, and each sequence of subsets defines a partition of $[n]$. Moreover, for any given partition, the protocol also depends on the order in which the subsets are chosen. Consequently, the user considers all the strategies arising from all possible permutations of subsets across all partitions and applies any that perfectly distinguish the unitaries. 
 
\subsection{Protocols for LOCC distinguishability}
In contrast, a local protocol involves $n$ spatially separated parties, indexed by $k = \{1,\cdots,n\} $, such that each party has access to one partite from the $n$-partite unidentified unitary $U_i = U^{(1)}_{i}\otimes U^{(2)}_{i}\otimes\cdots\otimes U^{(n)}_{i} = \bigotimes_{k=1}^{n} U^{(k)}_{i} $. In the protocol, party $k$ prepares the probing state $\rho_{A_kB_k}$ on $\mathcal{H}_{A_k} \otimes \mathcal{H}_{B_k}$, where $\mathcal{H}_{A_k}$ denotes the Hilbert space on which $U^{(k)}_{i}$ acts.

\subsubsection{LOCC distinguishability with restricted strategy (LDR)}
In the restricted class of local strategies, each party independently prepares a state $\rho_{A_kB_k}$ and lets the subsystem $A_k$ evolve under the unknown unitary. After a unitary $U_{i}$ acts on the respective subsystems, the evolved state for the $k$-th party is given by $\sigma^{i}_{A_kB_k} = (U_i^{(k)} \otimes \I_B) \rho_{A_kB_k} ( U_i^{(k)} \otimes \I_B)^\dagger$. Collectively, the parties are then left with one of $m$ possible $n$-partite states, $\{\bigotimes_{k=1}^{n} \sigma^{i}_{A_kB_k}  \}_{i=1}^m$, corresponding to the given set of unitaries $\{U_i\}$. Their task, therefore, boils down to distinguishing these evolved states using only local quantum operations on their respective systems, combined with classical communication among themselves. Note that, the parties choose the best possible initial probing states.

\subsubsection{LOCC distinguishability with adaptive strategy (LDA)}
In general, local strategies can be more intricate than the restricted ones described above. In an adaptive LOCC protocol, $k$-th party prepares its probing state $\rho_{A_kB_k}$ based on classical information received from other parties who have already performed measurements on their probed systems. Such a protocol involves many possible sequences in which the parties participate.
Say, the protocol begins with an initial party, say $k$, who prepares a state $\rho_{A_kB_k}$ and applies $U^{(k)}_{i}$ from the unknown unitary $U_{i}$. On the resulting state, the party performs a quantum measurement and obtains an outcome. Depending on the outcome, the party eliminates certain possibilities for $U^{(k)}_{i}$ and consequently for $U_i$ as well, and conveys the remaining possibilities for $U_i$ to another party, say $k'$. Based on this updated information, the $k'$-th party prepares a suitable state $\rho_{A_{k'}B_{k'}}$, applies the respective local unitary, and further eliminates specific possibilities among the remaining candidates for $U_i$.
This process continues sequentially, with each party reducing the set of possible unitaries through local quantum operations and passing the updated information along. Importantly, the choice of the next party depends on both the classical outcome obtained by the previous party and the information received in the earlier rounds. Thus, communication between the parties need not be strictly one-way, and the probing state and measurement by the subsequent party depend on the accumulated information. In this manner, the $n$ separated parties iteratively narrow down the set of possible unitaries until a unique unitary is identified. It is worth noting that for the special case of discriminating between two unitaries (\textit{i.e.}, $m=2$), the adaptive strategy provides no advantage over the restricted one.


Although adaptive strategies in the multipartite setting can become highly complex with many possible configurations, in the following, we focus on the bipartite case, which is comparatively simpler, to present all the main results. The schematic representation of the circuits for the four types of protocols to distinguish bipartite unitaries is provided in FIG. \ref{fig}. It is useful to introduce a simplified subclass of local adaptive strategy, namely \textit{one-way LDA}. Here, the protocol involves a single round of local quantum operations on each subsystem, which is conditioned on prior classical communication.
\begin{widetext}

\begin{figure}
    \centering
    \subfloat[Global distinguishability with restricted strategy (GDR)]{
    \begin{minipage}{0.48\linewidth}
\centering
   \begin{tikzpicture}[scale=1,
  decoration={brace,mirror,amplitude=6pt}] 
    
    \draw (0,0) -- (2.8,0);     
    \draw (0,-0.8) -- (2.8,-0.8); 
    \draw (0,-1.6) -- (2.8,-1.6); 
    
    \node[left] at (-0.5,-0.5) {$\ket{\psi}_{123}$};
    \node[left] at (0,0) {$1$};
    \node[left] at (0,-0.8) {$2$};
    \node[left] at (0,-1.6) {$3$};

    \draw[decorate] (-0.35,0.2) -- (-0.35,-1.8);

    \draw[fill=white] (1.5,-0.3) rectangle (2.5,0.3);     
    \node at (2.0,0) {$U_i^{(1)}$};

    \draw[fill=white] (1.5,-1.1) rectangle (2.5,-0.5);    
    \node at (2.0,-0.8) {$U_i^{(2)}$};
    
    \draw[fill=gray!20,rounded corners] (2.8,-1.8) rectangle (4.2,0.2);
    \node at (3.5,-0.8) {$M$};
    
    \draw (3,-0.5) arc[start angle=180,end angle=0,radius=0.5 cm];
      
    \draw[->, thick] (3.4,-0.5) -- (3.8,-0.2);
    \draw[double, -{Latex[length=2mm]}] (4.2,-0.8) -- (5.2,-0.8);
    \node[right] at (5.3,-0.8) {$z = i$};
    
\end{tikzpicture}
    \end{minipage}
    \label{figa} }
    \hfill
    \subfloat[Global distinguishability with adaptive strategy (GDA)]{
    \begin{minipage}{0.48\linewidth}
\centering
    \begin{tikzpicture}[scale=1,
  decoration={brace,mirror,amplitude=6pt}] 
    
    \draw (0,0) -- (6.2,0);        
    \draw (0,-0.8) -- (6.2,-0.8);  
    \draw (0,-2.1) -- (6.2,-2.1);  
    \draw (0,-2.6) -- (6.2,-2.6);  
    
    \node[left] at (0,0) {$k$};
    \node[left] at (0,-0.8) {$3$};
    \node[left] at (0,-2.1) {$\bar{k}$};
    \node[left] at (0,-2.6) {$4$};
    \node[left] at (-0.35,-1.6) {$\ket{\psi}_{k3\bar{k}4}$};
    \draw[fill=white] (1,-0.3) rectangle (2,0.3);
    \node at (1.5,0) {$U_i^{(k)}$};

     \draw[decorate] (-0.35,0.2) -- (-0.35,-2.8);
    
    \draw[fill=gray!20,rounded corners] (2.5,-0.95) rectangle (3.5,0.4);
    \node at (3,-0.4) {$M$};
    
    \draw (2.6,-0.1) arc[start angle=180,end angle=0,radius=0.4 cm];
      
      \draw[->, thick] (2.9,-0.1) -- (3.2,0.15);
    \draw[double, -{Latex[length=2mm]}] (3,-0.95) -- (3,-1.8);
    \node at (3.15,-1.3) {$a$};
    \draw[fill=white] (2.5,-2.8) rectangle (3.5,-1.8);
    \node at (3,-2.3) {$V_a$};
    
    \draw[fill=white] (4,-2.4) rectangle (5,-1.7);
    \node at (4.5,-2.05) {$U_i^{(\bar{k})}$};
    
    \draw[fill=gray!20,rounded corners] (5.2,-2.9) rectangle (6.4,0.3);
    \node at (5.8,-1.4) {$M$};
    
    \draw (5.3,-0.8) arc[start angle=180,end angle=0,radius=0.5 cm];
      
      \draw[->, thick] (5.7,-0.8) -- (6.1,-0.5);
    \draw[double, -{Latex[length=2mm]}] (6.4,-1.4) -- (7,-1.4);
    \node[right] at (7,-1.4) {$z = i$};
    
    \end{tikzpicture}
    \end{minipage}
    \label{figb} }
    \vspace{0.3cm}
    
    \subfloat[LOCC distinguishability with restricted strategy (LDR)]{
    \begin{minipage}{0.48\linewidth}
\centering
    \begin{tikzpicture}[scale=1,
  decoration={brace,mirror,amplitude=6pt}] 
    
    \draw (0,0) -- (2.8,0);     
    \draw (0,-0.8) -- (2.8,-0.8); 
    \draw (0,-1.6) -- (2.8,-1.6); 
    \draw (0,-2.4) -- (2.8,-2.4); 
    
    \node[left] at (-0.4,0) {$\ket{\psi}_{13}$};
    \node[left] at (-0.4,-1.7) {$\ket{\psi^{'}}_{24}$};
    \node[left] at (0,0) {$1$};
    \node[left] at (0,-0.8) {$2$};
    \node[left] at (0,-1.6) {$3$};
    \node[left] at (0,-2.4) {$4$};
    \draw[fill=white] (1.1,-0.3) rectangle (2.1,0.3);     
    \node at (1.6,0) {$U_i^{(1)}$};
    
     \draw[decorate] (-0.35,0.17) -- (-0.35,-0.9);
     \draw[decorate] (-0.35,-1.5) -- (-0.35,-2.5);
    
    \draw[fill=white] (1.1,-1.9) rectangle (2.1,-1.3);    
    \node at (1.6,-1.6) {$U_i^{(2)}$};
    \draw[dashed,->] (2.6,0) -- (2.6,-2.8);
    \draw[dashed,-] (-0.4,-1.15) -- (4,-1.15);
    \node at (2.8,-3) {$\ket{\phi_{i}}_{1234}$};
    \draw[fill=gray!20,rounded corners] (2.8,-2.6) rectangle (4.2,0.2);
    \node at (3.5,-0.3) {LOCC};
    \node at (3.5,-0.8) {Dist};
    \node at (3.5,-1.2) {of};
    \node at (3.5,-1.6) {$\{\ket{\phi_{i}}\}_i$};
    
      
    
      
    \draw[double, -{Latex[length=2mm]}] (4.2,-1.2) -- (5.2,-1.2);
    \node[right] at (5.3,-1.2) {$z = i$};
    
    \end{tikzpicture}
    \end{minipage}
    \label{figc} }
    \hfill
    \subfloat[LOCC distinguishability with adaptive strategy (LDA)]{
    \begin{minipage}{0.48\linewidth}
\centering
    \begin{tikzpicture}[scale=1,
  decoration={brace,mirror,amplitude=6pt}] 
    
    \draw (0,0) -- (6.2,0);        
    \draw (0,-0.8) -- (6.2,-0.8);  
    \draw (0,-2.1) -- (6.2,-2.1);  
    \draw (0,-2.6) -- (6.2,-2.6);  
    
    \node[left] at (0,0) {$k$};
    \node[left] at (0,-0.8) {$3$};
    \node[left] at (0,-2.1) {$\bar{k}$};
    \node[left] at (0,-2.6) {$4$};
    \node[left] at (-0.45,-0.2) {$\ket{\psi}_{k3}$};
    \node[left] at (-0.45,-2.1) {$\ket{\psi^{'}}_{\overline{k}4}$};
    
    \draw[fill=white] (1,-0.3) rectangle (2,0.3);
    \node at (1.5,0) {$U_i^{(k)}$};

    \draw[decorate] (-0.35,0.17) -- (-0.35,-0.9);
    \draw[decorate] (-0.35,-1.9) -- (-0.35,-2.7);
    
    \draw[fill=gray!20,rounded corners] (2.5,-0.95) rectangle (3.5,0.4);
    \node at (3,-0.4) {$M$};
    
    \draw (2.6,-0.1) arc[start angle=180,end angle=0,radius=0.4 cm];
      
      \draw[->, thick] (2.9,-0.1) -- (3.2,0.15);
    \draw[double, -{Latex[length=2mm]}] (3,-0.95) -- (3,-1.8);
    \node at (3.15,-1.25) {$a$};
    \draw[fill=white] (2.5,-2.8) rectangle (3.5,-1.8);
    \node at (3,-2.3) {$V_a$};
    
    \draw[fill=white] (4,-2.4) rectangle (4.9,-1.7);
    \node at (4.5,-2.05) {$U_i^{(\bar{k})}$};
    
        \draw[fill=gray!20,rounded corners] (5.2,-2.9) rectangle (6.5,0.3);
    \draw[dashed,->] (5,0) -- (5,-2.8);
    \draw[dashed,-] (-0.4,-1.4) -- (5.2,-1.4);
    \node at (5,-3.1) {$\ket{\phi_{i}}_{k3\overline{k}4}$};
    \draw[double, -{Latex[length=2mm]}] (6.5,-1.4) -- (7.1,-1.4);
    \node at (5.9,-0.3) {LOCC};
    \node at (5.9,-0.8) {Dist};
    \node at (5.9,-1.2) {of};
    \node at (5.9,-1.6) {$\{\ket{\phi_{i}}\}_i$};
     \node at (5.9,-2.1) {given};
     \node at (5.9,-2.4) {$a$};
    \node[right] at (7.2,-1.4) {$z = i$};
    
    \end{tikzpicture}
    \end{minipage}
    \label{figd}}
    \caption{Circuit diagrams illustrating four different protocols for distinguishing a set of \textit{bipartite} product unitaries $U_i^{(1)}\otimes U_i^{(2)}$. Each unitary $U_i^{(k)}$ acts on subsystem $k$ of the input state. In the circuits, single lines represent quantum systems, while double lines denote classical outcomes obtained from a measurement $M$. The outcome $z$ is the guess for the unitary of index $i$, which should ideally be the same as $z$. Except for the unitaries $U_i^{(k)}$, which are given in advance, all other operations --- states, unitaries, and measurements are arbitrary and can be freely chosen. The label "LOCC Dist. of $\{\ket{\phi_i}\}_i$" refers to a LOCC protocol used to distinguish the states $\{\ket{\phi_i}\}_i$. The variables $(k,\bar{k})$ can be either $(1,2)$ or $(2,1)$, reflecting the two possible orderings in which $U^{(1)}$ and $U^{(2)}$ may act on after another in the adaptive strategy for bipartite unitaries. The horizontal dashed line in figures \ref{figc} and \ref{figd} denotes the spatial separation between the local parties. }
    \label{fig}
\end{figure}
\end{widetext}

Any restricted strategy can be read as an adaptive strategy in which all parties choose their probing state without implementing the information coming from the previous parties.
\begin{obs}
Distinguishability for any set of multipartite unitaries follows the ordering:
\begin{align}\label{obse_1}
GDR & \ \ \subseteq \ \ GDA \notag \\[0.4em]
&\hspace{-0.5 cm}\rotatebox[origin=c]{90}{$\subseteq$}  \hspace{1.2 cm} \rotatebox[origin=c]{90}{$\subseteq$} \notag \\[0.4em]
LDR & \ \ \subseteq \ \ LDA.
\end{align}
\end{obs}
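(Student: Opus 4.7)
The plan is to establish Observation~1 by exhibiting, for each of the four asserted inequalities, an explicit simulation argument: any protocol belonging to the weaker class can be re-expressed as a special case of the stronger class, so that every set of unitaries perfectly distinguishable by the former is also perfectly distinguishable by the latter. No optimization argument is required; each step is structural.

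First I would verify the two horizontal inequalities, GDR$\,\leq\,$GDA and LDR$\,\leq\,$LDA. For GDR$\,\leq\,$GDA, an adaptive global user is free to choose their very first subset to be $\mathbbm{S}_1=[n]$; the entire unknown $U_i$ is then applied in a single round on the composite probing state $\rho_{AB}$ followed by a single global measurement, which is precisely the form of a GDR protocol. For LDR$\,\leq\,$LDA, the LDA framework permits each party's probing state $\rho_{A_kB_k}$ and measurement to depend on classical messages received from the others, but this dependence is optional: fixing each $\rho_{A_kB_k}$ and the associated POVM in advance reduces LDA to LDR. Thus adaptivity generalizes the restricted strategy in both the global and local settings.

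Next, for the two ``vertical'' inequalities LDR$\,\leq\,$GDR and LDA$\,\leq\,$GDA, I would simulate each LOCC protocol by a corresponding global one. Given an LDR protocol in which party $k$ prepares $\rho_{A_kB_k}$ and an LOCC measurement is performed on $\bigotimes_k \sigma^{i}_{A_kB_k}$, a globally restricted user can prepare the product state $\bigotimes_k \rho_{A_kB_k}$ on $\mathcal{H}_A\otimes\mathcal{H}_B$, apply the unknown $U_i=\bigotimes_k U_i^{(k)}$, and then implement the LOCC measurement as a (restricted) global POVM---since LOCC measurements form a strict subclass of general POVMs, the outcome statistics coincide identically. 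Hence GDR distinguishes whatever LDR distinguishes. For LDA$\,\leq\,$GDA the same template is applied sequentially: the global adaptive user picks singleton subsets $\mathbbm{S}_\ell=\{k_\ell\}$ in the order prescribed by the LDA protocol, prepares the corresponding party's probing state, and realizes each local POVM as a global POVM on that singleton subsystem. Because GDA explicitly allows the probing state of each subset to depend on earlier outcomes, the classical-communication structure of LDA is faithfully reproduced.

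The only point requiring a moment of care is to ensure that the adaptive global protocol, as defined, permits the trivial (identity) measurement between subset selections, so that the global simulation of an LOCC protocol does not inject any measurement-induced disturbance absent from the protocol being simulated; this is immediate from the freedom of measurement choice at each stage. Combining the four inclusions by transitivity also yields the cross-diagonal bound LDR$\,\leq\,$GDA, completing the diagram in Eq.~(\ref{obse_1}). The content of the observation is essentially that the four strategy classes fit into a Hasse diagram by definition, so the main obstacle is conceptual bookkeeping rather than mathematical difficulty; the substantive work of the paper lies in the later results that show several of these inequalities are in fact strict.
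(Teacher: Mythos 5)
Your proposal is correct and follows essentially the same route the paper intends: the observation is justified there only by the remark that any restricted strategy is an adaptive one in which parties ignore the communicated information, with the vertical inequalities left implicit because LOCC operations are a subclass of global ones. Your write-up simply makes these definitional inclusions explicit via the four simulation arguments, so no substantive difference arises.
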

%

\begin{thm}\label{thm_1st}
    For any set of bipartite qubit unitaries, LDA is equivalent to LDR for perfect distinguishability.
\end{thm}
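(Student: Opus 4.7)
Since LDR $\leq$ LDA holds by the preceding observation, it suffices to establish the converse: any LDA-distinguishable set of bipartite qubit unitaries $\{U_i = U_i^{(1)}\otimes U_i^{(2)}\}_i$ is also LDR-distinguishable. My strategy is to show that, for qubits, party 2's adaptive probe in any LDA protocol can be replaced without loss by the maximally entangled probe, so the classical feed-forward already permitted in LDR suffices to reproduce the LDA performance.

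First, I would fix an LDA protocol with perfect success and assume, by symmetry, that party 1 acts first with probe $|\psi\rangle_{13}$, applies $U_i^{(1)}$, and measures a POVM $\{M_a\}_a$ on $\mathcal{H}_1\otimes\mathcal{H}_3$; after refining to its spectral decomposition each $M_a$ may be taken to be rank one, so party 1's post-measurement state on system $3$ is a pure vector independent of $i$ and carries no residual information about the unitary. Define $S_a = \{i : \Pr(a|i)>0\}$. Perfect success then forces party 2, with its own (possibly $a$-adapted) probe $|\psi'_a\rangle_{24}$, to discriminate $\{U_i^{(2)}\}_{i\in S_a}$ on its own, i.e., the states $(U_i^{(2)}\otimes \I)|\psi'_a\rangle$ must be pairwise orthogonal for $i\in S_a$.

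The heart of the argument will be a qubit-specific lemma: a set of qubit unitaries $\{V_i\}_{i\in S}$ admits some discriminating probe if and only if $\tr(V_i^\dagger V_j)=0$ for all distinct $i,j\in S$. For necessity, write $V_i^\dagger V_j = e^{\mathbbm{i}\phi}[\cos(\theta/2)\I - \mathbbm{i}\sin(\theta/2)\,\hat n\cdot\vec\sigma]$ and the probe's reduced state on the target qubit as $\rho = (\I+\vec r\cdot \vec\sigma)/2$; the orthogonality condition $\langle\psi'|(V_i^\dagger V_j\otimes\I)|\psi'\rangle = \tr[(V_i^\dagger V_j)\rho] = e^{\mathbbm{i}\phi}[\cos(\theta/2) - \mathbbm{i}\sin(\theta/2)\,\hat n\cdot \vec r]$ then vanishes only if $\cos(\theta/2)=0$, i.e., $V_i^\dagger V_j$ is traceless. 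Sufficiency is automatic from the maximally entangled probe, for which $\rho = \I/2$ and $\langle(V_i\otimes\I)\Phi^+|(V_j\otimes\I)\Phi^+\rangle = \tr(V_i^\dagger V_j)/2 = 0$.

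Applying the lemma to each $\{U_i^{(2)}\}_{i\in S_a}$ shows those sub-families are pairwise trace-orthogonal, and the LDR protocol then follows: party 1 keeps $|\psi\rangle_{13}$ and $\{M_a\}_a$, party 2 independently prepares $|\Phi^+\rangle_{24}$; after both $U_i^{(1)}$ and $U_i^{(2)}$ have acted, party 1 broadcasts $a$ and party 2 measures in an orthonormal basis extending the now mutually orthogonal set $\{(U_i^{(2)}\otimes\I)|\Phi^+\rangle\}_{i\in S_a}$, identifying $i$ with certainty. I expect the main obstacle -- and the reason the qubit hypothesis is essential -- to be the key lemma: for $d\geq 3$ the vanishing of $\tr[(V_i^\dagger V_j)\rho]$ no longer forces $V_i^\dagger V_j$ to be traceless, so adaptive probes could access discriminations unavailable to any fixed probe and the reduction argument would break down. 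A secondary point to verify is that refining to rank-one POVMs is truly without loss of generality, which follows because the pure, $i$-independent residual on system $3$ neutralises any multi-round LOCC extension of the adaptive protocol.
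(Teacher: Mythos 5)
Your proof is correct and follows essentially the same route as the paper: both arguments rest on the fact that for qubit unitaries a single maximally entangled probe is universal, so the second party's probe never needs to depend on the first party's outcome and the classical feed-forward already allowed in LDR suffices. The only difference is that the paper simply cites Theorem 4 of \cite{entangledstatesusefulsingle} for this universality, whereas you prove it directly via the trace-orthogonality characterization $\tr(V_i^\dagger V_j)=0$, making your version self-contained.
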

\begin{proof}
LDA becomes strictly better than LDR when the last party needs different probing states to distinguish a different set of unitaries. Theorem $4$ of \cite{entangledstatesusefulsingle} states that if there are some sets of distinguishable qubit unitaries, then every set is distinguishable by any common maximally entangled probe. Therefore, in the restricted strategy, the second party can always choose any maximally entangled state irrespective of the communication coming from the previous party. 
\end{proof}
\section{Main Results}
In this section, we produce some explicit examples of the set of unitaries to show the hierarchy of the global and LOCC distinguishability in the context of restricted and adaptive strategies. 
For the first study, we give an example of bipartite unitaries that are locally indistinguishable but globally distinguishable. Such a pair of unitaries in $\mathbbm{C}^2\otimes \mathbbm{C}^2$ is following: 
\be\label{unitary1}
\mathcal{U}_1=Q_1\otimes R_1, \ \ 
\mathcal{U}_2=Q_2\otimes R_2 ,
\ee
$ Q_1=\ket{0}\!\bra{0}+e^{-\mathbbm{i}\alpha}\ket{1}\!\bra{1},$ $  
 Q_2 = \ket{0}\!\bra{0}+e^{\mathbbm{i}\gamma}\ket{1}\!\bra{1},$ $R_1=\ket{0}\!\bra{0}+e^{-\mathbbm{i}\beta}\ket{1}\!\bra{1},$ $  R_2 = \ket{0}\!\bra{0}+e^{\mathbbm{i}\delta}\ket{1}\!\bra{1}$ and $\alpha,\beta,\gamma,\delta < \pi/2,$ $ \alpha+\beta+\gamma+\delta=\pi$.
\begin{thm}\label{thr5}
    The unitaries described by \eqref{unitary1} are globally distinguishable with restricted strategy but not LOCC distinguishable.
\end{thm}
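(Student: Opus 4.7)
The plan is to establish the two claims separately, each by a short self-contained argument. For global distinguishability with a restricted strategy, I would probe the unitaries with the maximally entangled state $\ket{\Phi^+}=\tfrac{1}{\sqrt 2}(\ket{00}+\ket{11})$ on the composite $A\otimes B$ Hilbert space, with no ancilla needed. Since $\mathcal{U}_1^\dagger \mathcal{U}_2$ is diagonal in the computational basis with corner eigenvalue $e^{\mathbbm{i}(\alpha+\beta+\gamma+\delta)}=e^{\mathbbm{i}\pi}=-1$, the overlap $\bra{\Phi^+}\mathcal{U}_1^\dagger \mathcal{U}_2\ket{\Phi^+}=\tfrac{1}{2}(1-1)=0$, so the two evolved states are orthogonal and are perfectly distinguished by a global projective measurement.

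For the LOCC part, I would first invoke the remark in the excerpt that for $m=2$ the adaptive and restricted LOCC strategies coincide; it therefore suffices to show that LDR fails. In LDR each party prepares an independent (ancilla-assisted) probe, so the two evolved states factorize across the spatial cut: $\ket{\phi_i}=(A_i\otimes\I)\ket{\psi_A}_{A_1B_1}\otimes (R_i\otimes\I)\ket{\psi_B}_{A_2B_2}$. Perfect discrimination of two pure states by any protocol whatsoever requires orthogonality, so I would study when the overlap $\langle\phi_1|\phi_2\rangle=\bra{\psi_A}A_1^\dagger A_2\otimes\I\ket{\psi_A}\cdot\bra{\psi_B}R_1^\dagger R_2\otimes\I\ket{\psi_B}$ can vanish. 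Each factor lies in the numerical range of a normal operator, which is the convex hull of its spectrum, and hence is a chord in the complex plane joining $1$ to $e^{\mathbbm{i}(\alpha+\gamma)}$ (resp.\ to $e^{\mathbbm{i}(\beta+\delta)}$). The constraints $\alpha+\gamma<\pi$ and $\beta+\delta<\pi$ keep both chords strictly off the origin, so both factors are non-zero and therefore so is their product. No choice of local probes can orthogonalize $\ket{\phi_1}$ and $\ket{\phi_2}$, so LDR, and consequently LDA, fails.

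In my view the proof has no serious obstacle: the parameter constraints have been engineered precisely so that the two diagonal corner eigenvalues of $\mathcal{U}_1^\dagger \mathcal{U}_2$ are antipodal (enabling the entangled-probe attack on the global side), while each individual factor unitary has phase gap strictly less than $\pi$ (forbidding LOCC orthogonalization). The only conceptual point worth stating carefully is the factorization of $\langle\phi_1|\phi_2\rangle$ across the spatial cut together with the fact that LOCC distinguishability of two pure states reduces to orthogonality, which together collapse the LOCC question to two independent single-party numerical-range computations.
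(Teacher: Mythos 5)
Your proposal is correct and follows essentially the same route as the paper: the same Bell-state probe on $\mathcal{H}_1\otimes\mathcal{H}_2$ gives orthogonal evolved states because $\alpha+\beta+\gamma+\delta=\pi$, and the LOCC impossibility rests on the same observation that the spectra $\{1,e^{\mathbbm{i}(\alpha+\gamma)}\}$ and $\{1,e^{\mathbbm{i}(\beta+\delta)}\}$ have convex hulls avoiding the origin since $\alpha+\gamma,\beta+\delta<\pi$. Your explicit factorization of the overlap $\langle\phi_1|\phi_2\rangle$ across the spatial cut is a slightly cleaner justification of the step the paper phrases as ``classical communication is of no use,'' but it is the same argument in substance.
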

\begin{proof}
Consider a GDR, in which one party has complete access to both parts of the unitaries $\mathcal{U}_1$ and $\mathcal{U}_2$. The party uses the Bell state $\ket{\phi^+}=\frac{1}{\sqrt{2}}(\ket{00}+\ket{11})$ as the probe. After the application of the unitaries, the possible evolved states are $\mathcal{U}_1\ket{\phi^+}=(Q_1\otimes R_1)\ket{\phi^+}=\frac{1}{\sqrt{2}}(\ket{00}+e^{-\mathbbm{i}(\alpha+\beta)}\ket{11})$ and $\mathcal{U}_2\ket{\phi^+}=(Q_2\otimes R_2)\ket{\phi^+}=\frac{1}{\sqrt{2}}(\ket{00}+e^{\mathbbm{i}(\gamma+\delta)}\ket{11})$. To distinguish these two unitaries, the party needs to discriminate between the above two transformed states. One can compute the modulus of their inner product as $|\la \phi^+|\mathcal{U}_1^\dagger \mathcal{U}_2|\phi^+\ra|=\frac12|1+e^{\mathbbm{i}(\alpha+\beta+\gamma+\delta)}|$. Since $\alpha+\beta+\gamma+\delta=\pi$, these states are orthogonal, and hence perfectly distinguishable.

We now analyze LOCC protocols for distinguishing $\mathcal{U}_1$ and $\mathcal{U}_2$. Each party has access to two local unitaries. Suppose Alice initiates the protocol. In a single round, Alice can eliminate at most one candidate unitary; eliminating both would not provide any useful information to Bob. Therefore, a two-outcome measurement suffices for Alice, where each outcome eliminates one possibility. This reduces the problem to distinguishing the pair of unitaries $Q_1$ and $Q_2$. We compute $Q_1^\dagger Q_2 = \ket{0}\!\bra{0} + e^{i(\alpha+\gamma)} \ket{1}\!\bra{1}$. As discussed in Sec.~\ref{sec2}, perfect distinguishability requires $\min |con\{1, e^{i(\alpha+\gamma)}\}| = 0$, which holds only if $e^{i(\alpha+\gamma)} = -1$. However, from the definitions of $\alpha$ and $\gamma$, this condition is not satisfied. Therefore, Alice cannot perfectly distinguish $Q_1$ and $Q_2$. Consequently, Bob must eventually distinguish between the two unitaries. A similar argument shows that $R_1$ and $R_2$ are also not perfectly distinguishable. In the case of two-way LOCC protocols, classical communication from Bob cannot enhance Alice's ability to distinguish her local unitaries, as Bob's operations cannot increase distinguishability beyond what is achievable initially.\\
If Bob instead initiates the protocol, the same limitation applies. Since $R_1$ and $R_2$ are indistinguishable, he cannot eliminate any unitary perfectly in the first step. Consequently, regardless of which party initiates the protocol, the set of unitaries is not perfectly distinguishable under either LDR or LDA strategies.
\end{proof}

We move into an example of a set of four bipartite unitaries that can be discriminated by the local adaptive strategy but not by the global restricted strategy. To construct such an example, one must go beyond qubit unitaries, as Theorem \ref{thm_1st} points out that for any set of bipartite qubit unitaries, local restricted and local adaptive strategies are equivalent. Let us take the following four qutrit unitaries,
\bea\label{new2}
& \mathcal{V}_1=\I_3\otimes \I_3, \ \
\mathcal{V}_2=\I_3\otimes \Omega , \nonumber\\
& \mathcal{V}_3=\Omega\otimes \I_3 , \ \
\mathcal{V}_4=\Omega\otimes \Gamma ,
\eea
where $\Omega= \ket{0}\!\bra{0}+e^{\mathbbm{i}2\pi/3}\ket{1}\!\bra{1}+e^{\mathbbm{i}4\pi/3}\ket{2}\!\bra{2},$ $\Gamma=\ket{0}\!\bra{0}+\ket{1}\!\bra{1}-\ket{2}\!\bra{2}$, and $\I_d$ denotes the identity operator on $\mathbbm{C}^d$.
 \begin{thm}\label{thr3}
 The set of unitaries described at \eqref{new2} is distinguishable under LDA, but indistinguishable via GDR. 
 \end{thm}
\begin{proof}
    We first prove that the unitaries described by $\{\mathcal{V}_i\}_{i=1}^4$ are distinguishable in LDA.

For LDA, Alice starts the protocol by selecting the starting state $\ket{\Phi_s}=\frac{1}{\sqrt{3}}(\ket{0}+\ket{1}+\ket{2})$. After these unitaries act on this state, Alice will use the measurements having Krauss operators $\{\ket{\Phi_s}\bra{\Phi_s},\Omega\ket{\Phi_s}\bra{\Phi_s}\Omega^\dagger,\I-(\ket{\Phi_s}\bra{\Phi_s}+ \Omega\ket{\Phi_s}\bra{\Phi_s}\Omega^\dagger)\}$, where $\Omega\ket{\Phi_s}=\frac{1}{\sqrt{3}}(\ket{0}+e^{\mathbbm{i}2\pi/3}\ket{1}+e^{\mathbbm{i}4\pi/3}\ket{2})$. As a consequence, Bob has to distinguish between the different set of evolved states of his side depending on Alice's outcome, i.e, if the first outcome of Alice clicks, he has to distinguish between
$(\I_3,\Omega)$ and if it is the second outcome of Alice, he has to distinguish between $(\I_3,\Gamma)$ and there will be no occurrence of the third outcome. We can calculate that $\I_3^\dagger \Omega =\ket{0}\bra{0}+e^{\mathbbm{i}2\pi/3}\ket{1}\bra{1}+e^{\mathbbm{i}4\pi/3}\ket{2}\bra{2}$ and $\I_3^\dagger \Gamma = \ket{0}\bra{0}+\ket{1}\bra{1}-\ket{2}\bra{2}$. The necessary condition for $\I_3$ and $\Omega$ to be distinguishable is 
\be\label{B1}
\min|con\{1,e^{\mathbbm{i}2\pi/3},e^{\mathbbm{i}4\pi/3} \}|= 0,
\ee 
and similarly, for the perfect distinguishability of $\I_3$ and $\Gamma$, we need 
\be\label{B2}
\min|con\{1,1,-1\}|= 0.
\ee
For \eqref{B1}, it needs an equal convex mixture of three eigenvalues, and for \eqref{B2}, it requires an equal mixture of $1$ and $-1$. So parties successfully distinguish the unitaries in LDA.

For GDR to be $1$, we need to show that the evolved states after operation of the unitaries are orthogonal for the optimum probe $\ket{\psi_p}$.
The general probe will be a $4$-qutrit state, i.e., $\ket{\psi_p} = \sum_{x,y,z,w=0}^2\lambda_{xyzw}\ket{x}_A\ket{y}_B\ket{z}\ket{w}$. So we need to show that  $\langle\psi_p|\mathcal{V}_i^\dagger \mathcal{V}_j\otimes\I\otimes\I|\psi_p\rangle = 0$ for all $i\neq j$. Now we will check this orthogonality for the first pair of unitaries, i.e., $\mathcal{V}_1,\mathcal{V}_2$. With this probing state,
 \bea \label{R-rela}
 && \langle\psi_p|\mathcal{V}_1^\dagger \mathcal{V}_2\otimes\I\otimes\I|\psi_p\rangle\nonumber\\
 &=& \sum_{x,y,z,w}\lambda_{x,y,z,w}^*\lambda_{x,y',z,w}\bra{y}\I_3^\dagger \Omega\ket{y'}\nonumber\\
 &=& \sum_{x,y,z,w}|\lambda_{x,y,z,w}|^2 \bra{y}\I_3^\dagger \Omega\ket{y}\nonumber\\
 &=& 1(\mathcal{R}_1)+e^{\mathbbm{i}2\pi/3}(\mathcal{R}_2)+e^{\mathbbm{i}4\pi/3}(\mathcal{R}_3),
 \eea
  where,
  \bea\label{b5}
\mathcal{R}_1 &=& \sum_{x,z,w} |\lambda_{x,y=1,z,w}|^2, \ 
\mathcal{R}_2 =\sum_{x,z,w} |\lambda_{x,y=2,z,w}|^2,\nonumber\\
\mathcal{R}_3 &=& \sum_{x,z,w} |\lambda_{x,y=3,z,w}|^2.
  \eea
Similarly, we have
\bea\label{b6}
 \langle\psi_p|\mathcal{V}_3^\dagger \mathcal{V}_4\otimes\I\otimes\I|\psi_p\rangle
 &=& \sum_{x,y,z,w}|\lambda_{x,y,z,w}|^2\bra{y}\I_3^\dagger \Gamma\ket{y}\nonumber\\
 &=& \mathcal{R}_1+\mathcal{R}_2-\mathcal{R}_3.
 \eea
It is clear from \eqref{b5} that $\mathcal{R}_1,\mathcal{R}_2,\mathcal{R}_3$ are non-negative real numbers and sum to 1. Thus, to ensure that the expression \eqref{R-rela} is zero, we need $\mathcal{R}_1=\mathcal{R}_2=\mathcal{R}_3=1/3$. This condition does not make \eqref{b6} zero, that is, $1(\mathcal{R}_1)+ 1(\mathcal{R}_2)- 1(\mathcal{R}_3) =0$. That indicates $\ket{\psi_p}$ does not exist. Hence, global restricted strategy fails to discriminate the unitaries.
\end{proof}

The set of unitaries in \eqref{new2} is asymmetric. Numerical analysis further indicates that they are not distinguishable under one-way LDA when Bob initiates the protocol. This conclusion is obtained using semi-definite programming.

It is straightforward to see that Bob needs to eliminate at least two unitaries with each outcome of his measurement. Otherwise, Alice will end up with at least two identical states, which will be trivially indistinguishable irrespective of the restricted and adaptive strategy. For the elimination of two unitaries, Bob needs to remove one unitary among  $\I_3,\Omega$ and one unitary among $\I_3,\Gamma$, or he will eliminate three unitaries, which is a straightforward distinguishability condition. Thus, the favorable eliminations are $\{(\I_3,\Omega), (\I_3,\Gamma), (\Omega,\Gamma), (\I_3)\}$. In these possibilities, one can check that all three eliminations are also taken care of. Taking all the possibilities of Bob's elimination, using \textit{see-saw} method of semi-definite programming, we get the following optimization problem:
\bea
\mathcal{S}^{max}_{Q}
 &=& 1 -\min_{\rho_x,\{M_i\}_i} [ \tr( (\overline{\I_3}\rho_x \overline{\I_3}^\dagger + \overline{\Omega}\rho_x \overline{\Omega}^\dagger) M_1)\nonumber\\
 &&+ \tr ((\overline{\I_3}\rho_x \overline{\I_3}^\dagger + \overline{\Gamma}\rho_x \overline{\Gamma}^\dagger) M_2)\nonumber\\
 && +\tr (( \overline{\Omega}\rho_x \overline{\Omega}^\dagger + \overline{\Gamma}\rho_x \overline{\Gamma}^\dagger) M_3)\nonumber\\
&&+ \tr ((\overline{\I_3}\rho_x \overline{\I_3}^\dagger) M_4)\nonumber\\
\eea
\bea
 &\text{s.t.}& \rho_x\geqslant 0, \tr(\rho_x)=1, \rho_x\in\mathbbm{C}^d \nonumber\\
 && M_i\geqslant 0, \sum_{i=1}^4 M_i=\I, M_i \in\mathbbm{C}^d, d=9,\nonumber\\
 &\text{and}& \overline{\I_3}=\I_3\otimes\I_3,\overline{\Omega}=\Omega\otimes\I_3, \overline{\Gamma}=\Gamma\otimes\I_3.
\eea
  We find there does not exist any measurement $\{M_i\}_{i=1}^4$ such that $\mathcal{S}^{max}_{Q}=1$. Note that $M_i$ can take the value $0$, so if this optimization gives the highest value of $\mathcal{S}^{max}_{Q}$, we cannot be sure of Bob's success. Then we have to find all the nontrivial POVM elements and check that all the unitaries at Alice's side are exhaustively discriminated. But in this case, $\mathcal{S}^{max}_{Q}<1$, which means that Bob cannot execute any measurement such that the favorable options are eliminated at each outcome of the measurement. So, the unitaries are indistinguishable in LDA and Observation \ref{obse_1} implies that they will be indistinguishable in LDR also. Since the global restricted strategy remains unchanged, GDR cannot discriminate between unitaries. 


Finally, to demonstrate the usefulness of the entangled probing state, we identify the following set of LOCC distinguishable unitaries in $\mathbbm{C}^2\otimes \mathbbm{C}^2$ that becomes globally indistinguishable when only single (or separable) probing systems are allowed:
\bea\label{new}
& \mathcal{W}_1=\I_2\otimes \I_2, \ 
\mathcal{W}_2=Z\otimes X , \ 
\mathcal{W}_3=X\otimes H , \nonumber\\
& \mathcal{W}_4=X\otimes \overline{H}, \ \mathcal{W}_5=XZ\otimes H ,
\eea
with $Z=\ket{0}\!\bra{0}-\ket{1}\!\bra{1},$ $  X=\ket{1}\!\bra{0}+\ket{0}\!\bra{1}, $ $H=\ket{+}\!\bra{0}+\ket{-}\!\bra{1}$ and $\overline{H}=\ket{-}\!\bra{0}+\ket{+}\!\bra{1}$.
\begin{thm}
    The unitaries defined by \eqref{new} are indistinguishable under GDR and LDA if parties use only single systems probe, but distinguishable via LDR if the parties use entangled probes.
\end{thm}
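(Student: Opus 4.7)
The plan is to prove the two halves of the claim independently: (i) no single-system (pure product) probe globally distinguishes the five unitaries, and (ii) maximally entangled probes on each side enable a simple LOCC protocol.

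For (i), I would parametrize the probe as a pure product $|\alpha\rangle_{A_1}\otimes|\beta\rangle_{A_2}$; mixed separable probes on $A_1A_2$ cannot strictly help, since any such mixed probe is a convex combination of pure product probes and orthogonality of the evolved states must hold for each component. The evolved states are pure products, so their pairwise orthogonality requires that, for each index pair $(i,j)$, either $U_i^{(1)\dagger}U_j^{(1)}$ has zero expectation on $|\alpha\rangle$ or $U_i^{(2)\dagger}U_j^{(2)}$ has zero expectation on $|\beta\rangle$. Using $H = (X+Z)/\sqrt{2}$ and $\overline{H} = (I + iY)/\sqrt{2}$, I would compute the relevant products and observe that $\langle\beta|(I \pm iY)/\sqrt{2}|\beta\rangle$ has modulus at least $1/\sqrt{2}$ for every pure qubit $|\beta\rangle$, since $\langle Y\rangle_\beta$ is real. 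Three specific pairs therefore cannot be resolved on Bob's side and must fall on Alice: pair $(1,4)$ has Bob operator $\overline{H}$, forcing $\langle X\rangle_\alpha = 0$; pair $(2,3)$ has Bob operator $X^\dagger H = (I-iY)/\sqrt{2}$, forcing $\langle Y\rangle_\alpha = 0$; and pair $(3,5)$, where $H^\dagger H = I$ trivially on Bob, forces $\langle Z\rangle_\alpha = 0$. Together these constraints imply $|\alpha\rangle\langle\alpha| = I/2$, impossible for a pure state. Hence no pure product probe yields orthogonal evolved states, so global discrimination fails.

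For (ii), let each party prepare $|\Phi^+\rangle = (|00\rangle + |11\rangle)/\sqrt{2}$ on the probing qubit together with a local ancilla. Applying $U_i^{(1)}$ on Alice's side sends $|\Phi^+\rangle$ to $|\Phi^+\rangle, |\Phi^-\rangle, |\Psi^+\rangle, |\Psi^+\rangle, -|\Psi^-\rangle$ for $i=1,\ldots,5$, respectively. Alice performs a Bell-basis measurement: outcomes $|\Phi^+\rangle$, $|\Phi^-\rangle$, $|\Psi^-\rangle$ uniquely identify $i=1,2,5$. For outcome $|\Psi^+\rangle$, only $i\in\{3,4\}$ remain, and Alice communicates this to Bob. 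Bob's two residual candidates are $(H\otimes I)|\Phi^+\rangle$ and $(\overline{H}\otimes I)|\Phi^+\rangle$, whose inner product equals $\tfrac{1}{2}\tr(H^\dagger\overline{H}) = \tfrac{1}{2}\tr(X) = 0$, so Bob distinguishes them by projecting onto the corresponding orthonormal pair. The entire protocol uses only local operations and one round of classical communication.

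The main obstacle I anticipate is the case analysis for (i): one must identify precisely those pairs whose Bob-side overlap reduces to a linear combination of $I$ and $Y$ alone (or is outright $I$), so that the modulus of its expectation on any pure qubit is bounded below. It is the structural feature $\overline{H} = (I+iY)/\sqrt{2}$ — a nontrivial identity component in the Pauli decomposition — that forces all three Alice-side Pauli constraints simultaneously and thereby rules out pure $|\alpha\rangle$. Part (ii) is comparatively mechanical once one recognises that the Bell-basis measurement perfectly resolves the Alice-side images of $\{I,Z,X,XZ\}$, leaving just a single pair ambiguity for Bob to eliminate.
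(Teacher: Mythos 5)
Your part (ii) is correct and essentially identical to the paper's protocol: fixed $\ket{\phi^+}$ probes on both sides, Alice's Bell measurement resolves $\mathcal{W}_1,\mathcal{W}_2,\mathcal{W}_5$ outright, and the residual pair $\{\mathcal{W}_3,\mathcal{W}_4\}$ is passed to Bob, whose two candidate states are orthogonal because $\tr(H^\dagger\overline{H})=\tr(X)=0$. No issues there.

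The gap is in part (i): you prove only that the \emph{restricted} global strategy (GDR) with a product probe fails, whereas the theorem --- as the abstract states it (``globally indistinguishable under adaptive strategies, when probed with separable states'') and as the paper proves it --- concerns the \emph{adaptive} global strategy (GDA). Your argument imposes pairwise orthogonality of all five evolved states $\left(U_i^{(1)}\ket{\alpha}\right)\otimes\left(U_i^{(2)}\ket{\beta}\right)$ for a single fixed probe $\ket{\alpha}\otimes\ket{\beta}$, and your three Alice-side constraints $\langle X\rangle_\alpha=\langle Y\rangle_\alpha=\langle Z\rangle_\alpha=0$ correctly rule that out. But in GDA the user may probe subsystem $1$ with a single qubit, measure, and only then choose the probe and measurement on subsystem $2$ conditioned on the outcome; pairwise orthogonality under one fixed probe is then no longer necessary, since each outcome branch only needs to leave a set of candidates that the second subsystem can resolve. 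Concretely, with probe $\ket{0}$ on Alice's side the outcome $\ket{0}\!\bra{0}$ already eliminates the triple $\{\mathcal{W}_3,\mathcal{W}_4,\mathcal{W}_5\}$ and leaves Bob the perfectly distinguishable pair $(\I_2,X)$ --- a branch your orthogonality bookkeeping cannot see. The paper closes the adaptive case by a separate counting argument: Bob's single-qubit evolved states span a two-dimensional space, so each of Alice's outcomes must eliminate at least three unitaries, which for a qubit probe forces the three corresponding evolved states to be collinear; only the triple $(X,X,XZ)$ admits this, and the complementary branch then leaves Bob with the unresolvable multiset $(H,\overline{H},H)$. A symmetric analysis handles Bob starting. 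Without this (or an equivalent) treatment of the adaptive ordering, your proof establishes a strictly weaker statement than the theorem.
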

\begin{proof}
First, we point out that the set of unitaries in \eqref{new} is not distinguishable under GDR when only single-system probes are allowed. In this case, the probing system has dimension four, while the unitaries generate five evolved states in $\mathbbm{C}^4$. Since at most four mutually orthogonal states can exist in a four-dimensional Hilbert space, these five states cannot be perfectly distinguished.

Second, we consider LDA with single-system probes, starting with the scenario in which Alice initiates the protocol. In order for Bob to complete the discrimination, Alice must eliminate at least three unitaries, leaving at most two candidates for Bob. This is necessary because Bob cannot distinguish more than two unitaries, as the corresponding evolved states lie in a two-dimensional subspace. The distinguishable pairs on Bob's side are $\{(\I_2, X), (H,\overline{H}), (\I_2,H), (X,\overline{H})\}$. Accordingly, the triples that Alice must eliminate are $\{(X,X,XZ), (\I_2,Z,XZ), (Z,X,XZ), (\I_2,X,XZ)\}$. To achieve such elimination using a single-system probe, Alice’s measurement must be chosen so that the corresponding POVM element is orthogonal to all three resulting states for any given triple. This is only possible if the POVM element is rank-one and the three states are identical up to a global phase. Among the listed triples, only $(X,X,XZ)$ satisfies this condition, with probe $\ket{0}$ or $\ket{1}$. Therefore, no measurement strategy allows Alice to eliminate the required triples in general. \\
It is also straightforward to see that multi-round LOCC protocols do not offer any advantage. Suppose Alice eliminates one or two unitaries in the first round, leaving at least two distinct unitaries. Since the elimination process necessarily involves rank-one POVM elements, the post-measurement states corresponding to the remaining unitaries become identical. Consequently, any subsequent operation by Bob, along with classical communication, cannot help Alice further discriminate among them. Hence, the protocol fails. \\
Similarly, if Bob starts the protocol, one can follow the same analysis as before. The triples Bob can eliminate are $(H,\overline{H},H)$ and $(\I_2,H,H)$. Eliminating these triples can make it possible to distinguish only $\mathcal{W}_1,\mathcal{W}_2$ and $\mathcal{W}_4$. So, irrespective of the starter, LDA remains unsuccessful in this distinguishing task. 

Finally, we show the third part of the proof, that is, the set of unitaries \eqref{new} is distinguishable under LDR with entangled probes. Let Alice and Bob implement the Bell state probe $\ket{\phi^+}=\frac{1}{\sqrt{2}}(\ket{00}+\ket{11})$ locally as the initial probing state. So, the evolved states are,
\bea\label{evolved_new}
(\I\otimes\I)_A\otimes(\I\otimes\I)_B\ket{\phi^+}^{\otimes 2}&=&\ket{\phi^+}\otimes\ket{\phi^+}, \nonumber\\
(Z\otimes\I)_A\otimes(X\otimes\I)_B\ket{\phi^+}^{\otimes 2}   &=& \ket{\phi^-}\otimes\ket{\psi^+}, \nonumber\\
    (X\otimes\I)_A\otimes(H\otimes\I)_B\ket{\phi^+}^{\otimes 2} &=& \ket{\psi^+}\otimes\ket{\tilde{\psi}^+}, \nonumber\\
   (X\otimes\I)_A\otimes(\overline{H}\otimes\I)_B\ket{\phi^+}^{\otimes 2} &=&\ket{\psi^+}\otimes\ket{\tilde{\psi}^-}, \nonumber\\  (XZ\otimes\I)_A\otimes(H\otimes\I)_B\ket{\phi^+}^{\otimes 2} &=& \ket{\psi^-}\otimes\ket{\tilde{\psi}^+},
    \eea
where $\ket{\phi^\pm}=\frac{1}{\sqrt{2}}(\ket{00}\pm \ket{11}),$ $ \ket{\psi^\pm }=\frac{1}{\sqrt{2}}(\ket{10}\pm\ket{01}), $   $\ket{\tilde{\psi}^\pm}=\frac{1}{\sqrt{2}}(\ket{\pm 0}+\ket{\mp 1})$.
Subsequently, Alice first measures on her subsystem in the Bell bases, that is, $\{\ket{\phi^+}, \ket{\phi^-}, \ket{\psi^+}, \ket{\psi^-}\}$. When the third outcome $(\ket{\psi^+})$ occurs, Eq.~\eqref{evolved_new} suggests that the state on Bob's side is either $\ket{\tilde{\psi}^+}$ or $\ket{\tilde{\psi}^-}$. Alice communicates her outcome to Bob, and then Bob measures his subsystem in $\{\ket{\tilde{\psi}^+},\ket{\tilde{\psi}^-}\}$ basis to distinguish between these two states, and consequently distinguishes between $\mathcal{W}_3$ and $\mathcal{W}_4$. For the other three outcomes of Alice, she can directly distinguish between $\mathcal{W}_1, \mathcal{W}_2$ and $\mathcal{W}_5$, which is evident from \eqref{evolved_new}.

If Bob starts the protocol, Bob measures in $\{\ket{\tilde{\psi}^+},\ket{\tilde{\psi}^-}\}$ basis. If the first outcome $(\ket{\tilde{\psi}^+})$ occurs, Bob communicates the outcome to Alice. On Alice's side, it reduces to three orthogonal Bell states, $\{\ket{\phi^-}, \ket{\psi^+}, \ket{\psi^-}\}$. Similarly, when the second outcome $(\ket{\tilde{\psi}^-})$ clicks on Bob's side, Alice needs to distinguish between two orthogonal states, $\{\ket{\phi^+}, \ket{\psi^+}\}$. Therefore, the unitaries are distinguishable in LDR.
\end{proof}

\section{Discussion}
In summary, this work shows that nonlocality without entanglement, when extended to unitary processes, exhibits intriguing features and imposes fundamental structural limitations on unitary discrimination across different strategies. Through a series of theorems, we establish non-equivalence and non-hierarchical relations between the adaptive and restricted strategies in both global and LOCC settings. We also further highlight the usefulness of entangled probing states. FIG.~\ref{venn} summarizes our main results and includes the two most immediate and important open problems.

\begin{figure}
    \centering


\begin{tikzpicture}[scale=0.7, transform shape]

\fill[gray!10] (-3.5,-2.7) rectangle (3.5,2.7);

\fill[gray!25] (-2.7,-1.6) rectangle (0.9,1.6);

\fill[gray!50] (-0.8,-2.3) rectangle (3,0.4);

\fill[gray!75] (-0.6,-1.3) rectangle (0.7,0.2);


\draw[gray!70!black, thick] (-3.5,-2.7) rectangle (3.5,2.7);
\node at (0,2.3) {\large GDA};

\draw[gray!80!black, thick] (-2.7,-1.6) rectangle (0.9,1.6);
\node at (-1.8,0.3) {\large GDR};

\draw[gray!90!black, thick] (-0.8,-2.3) rectangle (3,0.4);
\node at (1.8,-1) {\large LDA};

\draw[black, thick] (-0.6,-1.3) rectangle (0.7,0.2);
\node at (0.09,-0.5) {\large LDR};

\end{tikzpicture}
    \caption{This Venn diagram illustrates Observation \ref{obse_1}, where each rectangle represents the sets of multipartite product unitaries that are distinguishable under the respective strategy. The sets corresponding to GDR and LDA intersect, but neither is contained within the other, as established by Theorems \ref{thr5} and \ref{thr3}. The results lead to two natural open questions: $(i)$ Are GDR and LDA strict subsets of GDA? In other words, does there exist a set of unitaries that is distinguishable under GDA but not under either GDR or LDA? $(ii)$ Is the intersection of GDR and LDA equal to LDR, or are there sets of unitaries that are distinguishable under both GDR and LDA but not under LDR?}
    \label{venn}
\end{figure}

In future, one may attempt to evaluate the success probability of distinguishing a given set of unitaries, which could provide a quantitative measure of the separation between different discrimination strategies.  
Another interesting study can be carried out to find genuinely nonlocal sets of product unitaries or general quantum channels. In the context of quantum states, it is a well-developed field where genuinely nonlocal multipartite quantum systems are globally distinguishable and locally not distinguishable in any partition \cite{halder,zhen,ghosh_s}. Additionally, one can conceive various quantum computation and information-theoretic tasks by applying the results of global and local distinguishability of unitary operations.

\subsection*{Acknowledgment} We gratefully acknowledge the financial support from STARS (STARS/STARS-2/2023-0809), Govt. of India.

\bibliography{ref}

\end{document}